\documentclass[conference]{IEEEtran}
\IEEEoverridecommandlockouts
\usepackage[top=0.752in, bottom=1.05in, left=0.680in, right=0.673in]{geometry}

\usepackage{cite}
\usepackage{amsmath,amssymb,amsfonts,mathrsfs,amsthm}
\usepackage{algorithm}
\usepackage{algpseudocode}
\usepackage{multirow}
\usepackage{tikz}
\usepackage{graphicx}
\usepackage{textcomp}
\usepackage{xcolor}
\usepackage{url}
\usepackage{adjustbox}
\usepackage{balance}
\usepackage{makecell}
\newtheorem{theorem}{Theorem}

\DeclareMathOperator*{\argmin}{arg\,min}
\begin{document}

\title{SINR-Aware  Base Station Deployment in\\ Wide Area IoT Sensor Networks}

\author{\IEEEauthorblockN{Sachin Kadam}
\IEEEauthorblockA{{Department of Electronics and Communication Engineering} \\
{Motilal Nehru National Institute of Technology (MNNIT) Allahabad, Prayagraj, UP 211004, India}\\
Email: sachink@mnnit.ac.in}
}

\maketitle

\begin{abstract}
The rapid expansion of Internet of Things (IoT) applications necessitates the effective deployment of base stations (BSs) to enable consistent connectivity across large geographic areas under interference-limited conditions. Existing techniques typically use distance-based or binary coverage models; however, these abstractions fail to account for the influence of co-channel interference on the quality of communication in dense deployments. In this paper, we investigate the Signal-to-Interference-plus-Noise Ratio (SINR)-aware Base Station Deployment (BSD) problem in wide-area IoT sensor networks. The objective is to determine a minimum-cost subset of BSs from a predefined set of candidate BSs such that every IoT sensor is covered by at least one BS and a target SINR threshold is satisfied. The problem is formulated as a combinatorial optimization problem, which is NP-hard. Theoretical analysis establishes that the proposed coverage function is monotone and submodular, enabling the SINR-aware greedy algorithm to achieve a ($1\text{-}1/e$)-approximation to the optimal solution while maintaining a polynomial-time computational complexity. Numerical evaluations on a real water distribution network dataset demonstrate that the proposed SINR-aware greedy algorithm achieves near-optimal base station deployment while significantly reducing computational effort. Compared with the Genetic Algorithm (GA) and Particle Swarm Optimization (PSO) algorithms, the proposed approach attains complete sensor coverage with deployment costs within 12.3\% of the best-performing metaheuristic solution while requiring up to 190 times lower execution time.
\end{abstract}

\begin{IEEEkeywords}
IoT Sensor Networks, Base Station Deployment, Submodular Optimization, Greedy Algorithms, Wide Area Networks
\end{IEEEkeywords}

\section{Introduction}
The growth of Internet of Things (IoT) technologies and wireless sensor networks (WSNs) has enabled large-scale monitoring and automation in a wide range of application domains, including precision agriculture, environmental monitoring, and industrial systems. These technologies enable diverse applications such as intelligent irrigation, soil moisture monitoring, optimized fertilizer usage, early detection of pests and crop diseases, and efficient energy management~\cite{mowla2023internet}. In such wide-area IoT deployments, a massive number of spatially distributed sensors generate data that must be reliably collected by infrastructure nodes, typically base stations (BSs). As network scale increases, ensuring robust connectivity and quality-of-service (QoS) becomes a critical challenge, particularly in dense deployments with heterogeneous traffic demands~\cite{nb_iot_2024,ris_bs_2023}. In such wide-area IoT networks, the strategic deployment of base stations is important for ensuring connectivity and coverage for geographically dispersed sensors. Furthermore, IoT sensors (IoT devices interfaced with sensors) are often energy-constrained, requiring careful management of transmission power to maintain network longevity~\cite{cai2024joint}. As a result, base station placement selections must consider coverage, energy efficiency, and system cost. In practice, base stations are deployed at specific candidate sites such that their combined coverage includes all IoT sensors.
\begin{figure}
    \centering  \includegraphics[width=0.9\linewidth]{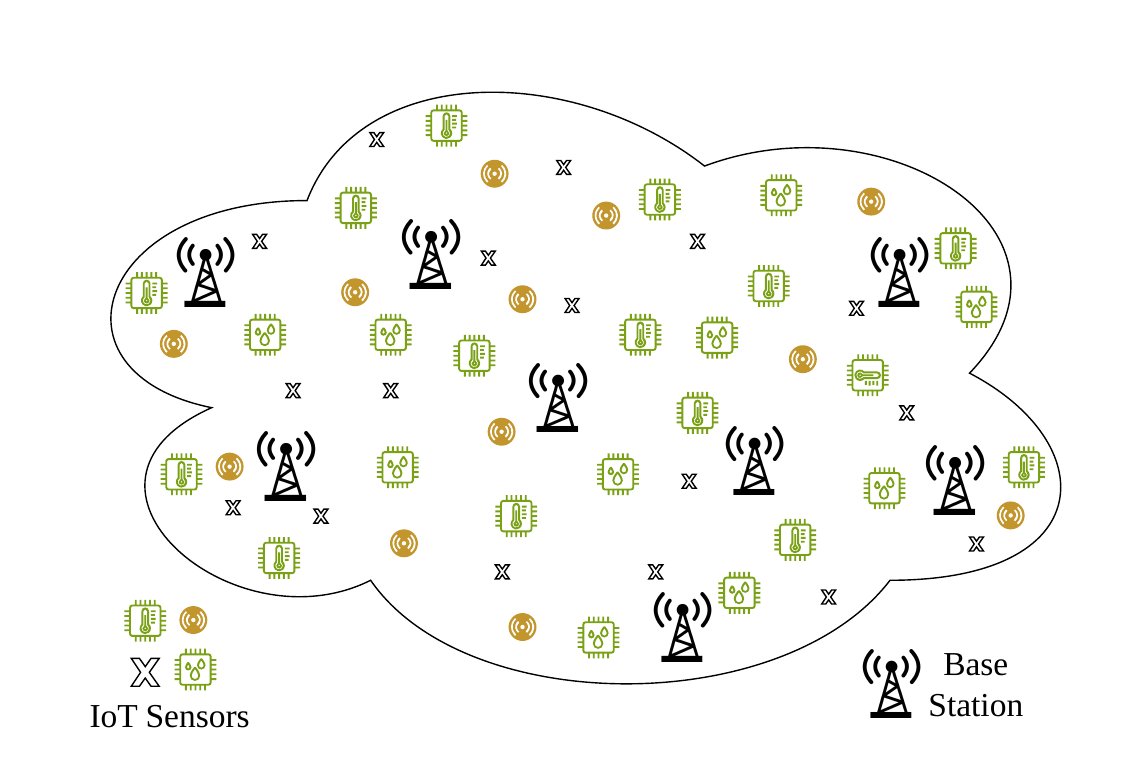}
    \caption{Illustration of a base station deployment (BSD) in a wide area region such that the combined coverage of these BSs includes all deployed IoT sensors.}
    \label{fig:MultiBS}
\end{figure}

Co-channel interference is a critical element impacting communication reliability in dense IoT sensor networks, as it has a direct impact on the receiver's signal-to-interference-plus-noise ratio (SINR). SINR is generally regarded as a crucial metric for evaluating link dependability and communication performance in current wireless communication systems. It captures the combined impacts of interference, noise, and channel conditions~\cite{sinr_model_2024}. Unlike traditional coverage models based solely on distance or binary coverage models, SINR-based models provide a more realistic representation of wireless communication quality, especially in interference-limited environments.

In this paper, we investigate the SINR-aware Base Station Deployment (BSD) problem in wide-area IoT sensor networks.
We formulate an optimization problem whose objective is to find the optimal set of BSs from a set of candidate BSs to cover the complete set of IoT sensors in the field with the minimum cost, measured in terms of the number of BSs. Note that the proposed cost model serves as a proxy for deployment efficiency, where BSs covering a larger number of sensors incur lower effective costs. Consequently, minimizing the aggregate cost also promotes deployments with fewer, strategically positioned BSs. Solving this class of problems exactly requires a combinatorial search. However, recognizing that this problem formulation falls in the so-called class of submodular optimization problems, we propose the use of a greedy algorithm to solve it since, for submodular problems, there are guarantees that the solution will provide a result for the objective that differs from the optimum one by no more than $(1-1/e)$~\cite{schrijver2003combinatorial}. This distinguishes the proposed formulation and the solution from conventional coverage-based deployment models and enables more realistic and reliable network design in interference-limited IoT environments. 

The remainder of this paper is organized as follows: Section II reviews state-of-the-art literature work on base station deployment and IoT network optimization. Section III presents the system model and problem formulation. Section IV and Section V discuss the analytical results and numerical results, respectively. Finally, Section VI concludes the paper and outlines a few future research directions.

\section{Related Work}
Recent advances in WSNs and IoT applications have highlighted the significance of base station deployment in providing dependable communication, adequate coverage, long-term network operation, and scalable system architecture. As a result, extensive research has been conducted to develop effective deployment and mobility strategies for BSs in large-scale sensing settings.

Initial research centered mostly on increasing network longevity through energy-efficient deployment technologies. In ~\cite{hu2020co}, the authors explored a hybrid architecture that included a terrestrial BS and a UAV-assisted aerial BS. Their approach enhanced coverage partitioning and bandwidth allocation to reduce sensor nodes' uplink transmission energy, resulting in significant energy savings over traditional deployment strategies.

Apart from deployment optimization, researchers investigated the impact of BS architecture and mobility on network performance. A multi-sector antenna system was developed in ~\cite{oliveira2022implementation} for environmental monitoring applications, ensuring reliable communication at distances greater than 3.5 km and providing comprehensive coverage. On a more general level, the combination of WSNs and IoT systems raises new issues on localization, routing and security. These elements have been identified as the key determinants of network efficiency and reliability in the survey presented in~\cite{vishwas2025recent}. It underscores the necessity of precise localization methods, efficient routing protocols, and strong security mechanisms for extensive IoT implementation.

Recently, researches on base station deployment have been focused on optimization frameworks that aim at multiple performance objectives simultaneously. In~\cite{zhu2023edge_deployment} the authors studied the deployment solutions for edge-enabled sensor networks with coverage and reliability constraints. Similarly, the work in~\cite{li2025bsd_dynamic} highlighted the importance of realistic spatial modeling for deployment planning in dynamic environments. To cope with the intrinsic complexity of deployment problems, many optimization based approaches have been proposed. In~\cite{chen2023multiobjective}, authors introduced a multi-objective framework to balance deployment cost and network reliability, while in~\cite{zaimen2024hybrid}, a hybrid metaheuristic approach was utilized to optimize deployment in complex scenarios. Recent survey studies~\cite{tossa2025survey} reinforce the direct impact of deployment decisions on the network scalability, energy efficiency, and long-term operational efficacy in IoT systems.

Recent studies have looked at how to optimize deployments in IoT and WSNs in terms of coverage, cost, and reliability. For example, next-generation IoT systems have shown significant gains in coverage and capacity with RIS-assisted networks and dense multi-base station architectures~\cite{ris_bs_2023,bie2024ris}. But many existing methods are based on simplified communication models which ignore interference or assume binary coverage conditions. These assumptions can result in sub-optimal or infeasible deployments in wide-area dense networks due to severe interference and link quality degradation.

Mobile base stations (MBSs) are proposed as an alternative to the traditional fixed deployment for wide-area IoT applications. The studies presented in~\cite{kadam2020spcom,KADAM2024103779,kadam2026mobilebasestationoptimal} have shown that MBS-assisted designs can potentially improve data collection efficiency and extend network coverage in geographically distributed sensing environments. These results indicate that the integration of mobility into network architecture is a promising approach to solve the challenges involved with large-scale deployment of IoT sensors.

Despite these major gains, current research has mostly focused on energy efficiency, coverage maximization, deployment cost reduction, and network lifetime enhancement. Furthermore, while alternative approaches enhance deployment efficiency, most recent works use simplified communication models that assume binary coverage or ignore interference effects. During the base station deployment process, there has been relatively little emphasis on directly incorporating signal quality measures, particularly the SINR. It has a direct impact on communication reliability, data delivery performance, and network robustness in dense and large-scale IoT environments, so deployment strategies that explicitly account for interference-aware communication conditions remain an important and largely unexplored research direction. This gap motivates the development of SINR-aware base station deployment frameworks for wide-area IoT sensor networks. Furthermore, this integrated formulation more accurately reflects practical deployment scenarios.

\section{System Model and Problem Formulation}
Let $\mathcal{B}$ be the set of given BSs, with $\beta_b$ being the location of BS $b$. Similarly, $\mathcal{S}$ be the set of all deployed IoT sensors with $\xi_s$ being the location of IoT sensor $s$ in a given region. So, $\mathcal{B} = \{\beta_b = (\bar{x}_b,\bar{y}_b) : b = 1, \ldots, B\}$ and $\mathcal{S} = \{\xi_s = (x_s,y_s) : s = 1, \ldots, S\}$, where $B = |\mathcal{B}|$ and $S = |\mathcal{S}|$. 
As shown in Fig.~\ref{fig:MultiBS}, the distance between a given BS location $\beta_b$ and a given sensor location $\xi_s$ is $d_{bs} = \sqrt{(x_s - \bar{x}_b)^2 + (y_s - \bar{y}_b)^2 + (h_b - h_s)^2}$, where $h_b$ and $h_s$ are the heights of BS $b$ and sensor $s$, respectively.
\begin{figure}
    \centering  \includegraphics[width=0.86\linewidth]{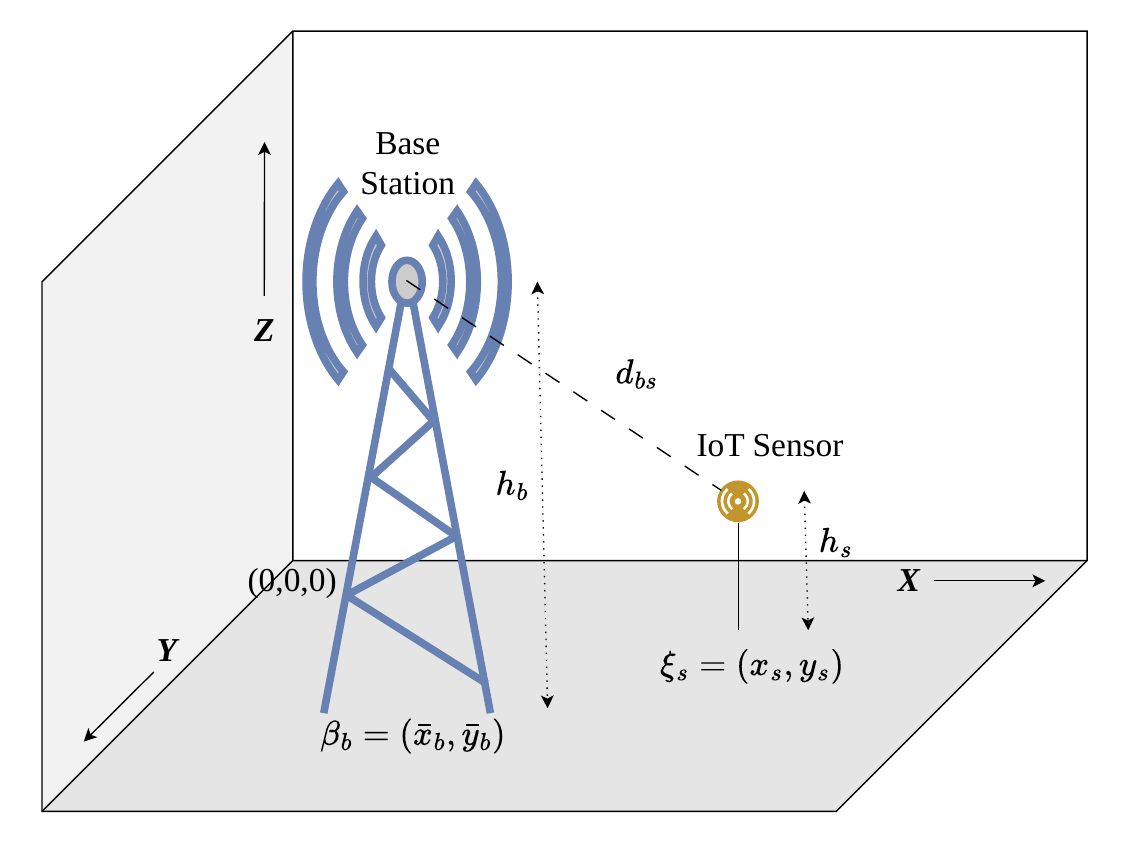}
    \caption{The distance $d_{bs}$ between a base station $\beta_b$ and an IoT sensor $\xi_s$ accounts for both horizontal separation and height difference.}
    \label{fig:bsd_model}
\end{figure}
The received power at the sensor $s$ from BS $b$ is:
\begin{equation}
P_{r}(b,s) = P_t (b) d_{bs}^{-\alpha} r_{bs}
\end{equation}
where $P_t (b)$ is the transmission power of BS $b$, $\alpha$ is the path-loss exponent and $r_{bs}$ models Rayleigh fading. Next, we compute the SINR at the sensor $s$ from the base station $b$   using the following equation:
\begin{equation}
\text{SINR}_{bs} = \frac{P_r(b,s)}{\sum_{b' \neq b} P_r(b',s) + \sigma^2}
\end{equation}
where $\sigma^2$ corresponds to the AWGN power. A sensor $s$ is considered covered by BS $b$ if:
\begin{equation}
\text{SINR}_{bs} \geq \gamma_{th},
\end{equation}
where $\gamma_{th}$ denotes a predetermined threshold value.

Let $X_{bs}\in~\{0,1\}$ be equal to 1 if the sensor located at $\xi_s$ is in the coverage range of the BS deployed at $\beta_b$, and zero otherwise. Mathematically, $X_{bs}$, $\forall s \in \mathcal{S}, \forall b \in \mathcal{B}$ can be defined as:
\begin{align}
    X_{bs} := 
    \begin{cases}
    1, &\mbox{if } \text{SINR}_{bs} \geq \gamma_{th} \\ 
    0, &\mbox{otherwise}.
    \end{cases}
\end{align}
Let $\mathcal{X}_b$ be the set of sensors that are in the coverage range of BS $b$. Mathematically, $\mathcal{X}_b$, $\forall b \in \mathcal{B}$ can be defined as:
\begin{align}
    \mathcal{X}_b := \{ s | X_{bs} =1, \forall s \in \mathcal{S} \}.
\end{align}
For any subset $\mathcal{A} \subseteq \mathcal{B}$, we define the coverage function  $f(\mathcal{A})$ as:
\begin{equation}
    f(\mathcal{A}) := \left|\bigcup_{a \in \mathcal{A}} \mathcal{X}_a\right|.
    \label{eq:coverage}
\end{equation}
It denotes the number of sensors in the coverage range of BSs in $\mathcal{A}$. 
Let $\psi_b = |\mathcal{X}_b|$. So $\psi_b = \sum_{s \in \mathcal{S}} X_{bs}$ denotes the number of sensors in the coverage range of BS $b$.  
Let $c_b$ be the cost of installing the BS $b \in \mathcal{B}$ which approximately decreases as the expected number of sensors in its range increases. 
That is, $c_b = \psi_b^{-\eta},$ for $\eta > 0$. From this definition, we see that $f(\mathcal{B}) = S$. Also, 
\begin{equation}
    \psi_b (\mathcal{A}) = f(\mathcal{A} \cup b) - f(\mathcal{A}).
    \label{eq:psi_b}
\end{equation}

We are interested in finding the set with the least number of BSs, set $\mathcal{A}$, called BSD,  from a given set of candidate BSs and for a given sensor deployment such that each sensor is in the coverage range of at least one BS. Now, we define the optimal base station deployment (BSD) problem as follows:
\begin{subequations}
\begin{align} \label{eq:Max_Cover}
\argmin_{\substack{{\mathcal{A}}}} & \sum_{a \in \mathcal{A}} c_{a}   \\
\text{ s.t. } & f(\mathcal{A}) = S, ~S \in \mathcal{N},\label{eq:coverageconstraint}\\ 
&\mathcal{A} \subseteq \mathcal{B} . \label{eq:subsetconstraint}
\end{align}
\end{subequations}
In this BSD problem, we are finding the minimal cost BS subset $\mathcal{A}$ with the size $A = |\mathcal{A}|$ subject to full coverage of all the sensors by set $\mathcal{A}$ (see~\eqref{eq:coverageconstraint}) and it should be a subset of the given set $\mathcal{B}$ (see~\eqref{eq:subsetconstraint}).

\section{Analytical Results}
\begin{theorem}\label{thm:sub}
    $f(\mathcal{A})$ is a monotone and submodular function.
\end{theorem}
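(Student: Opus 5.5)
The plan is to treat $f$ as a weighted set-cover (coverage) function. The key observation is that the sets $\mathcal{X}_b$ are fixed before any subset $\mathcal{A}$ is chosen. In the model above, $\text{SINR}_{bs}$ is computed with interference summed over all $b'\neq b$ in $\mathcal{B}$, so each $X_{bs}$, and hence each $\mathcal{X}_b$, depends only on the candidate set $\mathcal{B}$ and the sensor locations, not on $\mathcal{A}$. Once this is stated explicitly, $f(\mathcal{A})=\left|\bigcup_{a\in\mathcal{A}}\mathcal{X}_a\right|$ is the cardinality of a union of fixed subsets of the ground set $\mathcal{S}$. Both properties then reduce to elementary set inclusions. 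I would also record that $f(\emptyset)=0$, so $f$ is normalized.

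\emph{Monotonicity.} Take $\mathcal{A}_1\subseteq\mathcal{A}_2\subseteq\mathcal{B}$. Every term of the union defining $f(\mathcal{A}_1)$ also appears in the union defining $f(\mathcal{A}_2)$. Hence $\bigcup_{a\in\mathcal{A}_1}\mathcal{X}_a\subseteq\bigcup_{a\in\mathcal{A}_2}\mathcal{X}_a$, and taking cardinalities gives $f(\mathcal{A}_1)\le f(\mathcal{A}_2)$.

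\emph{Submodularity.} I will use the diminishing-returns characterization. For $\mathcal{A}_1\subseteq\mathcal{A}_2\subseteq\mathcal{B}$ and $b\in\mathcal{B}\setminus\mathcal{A}_2$, I need
\[
f(\mathcal{A}_1\cup\{b\})-f(\mathcal{A}_1)\ \ge\ f(\mathcal{A}_2\cup\{b\})-f(\mathcal{A}_2).
\]
Write $U_i=\bigcup_{a\in\mathcal{A}_i}\mathcal{X}_a$. Because $U_i$ and $\mathcal{X}_b\setminus U_i$ are disjoint, the marginal gain in \eqref{eq:psi_b} is $\psi_b(\mathcal{A}_i)=|\mathcal{X}_b\setminus U_i|$. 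Monotonicity of the union gives $U_1\subseteq U_2$, so $\mathcal{X}_b\setminus U_2\subseteq\mathcal{X}_b\setminus U_1$, and therefore $\psi_b(\mathcal{A}_1)\ge\psi_b(\mathcal{A}_2)$. As an alternative, one could write $f(\mathcal{A})=\sum_{s\in\mathcal{S}}\min\{1,\sum_{a\in\mathcal{A}}X_{as}\}$. Each summand is a concave function of a modular function, so it is submodular, and a nonnegative sum of submodular functions is submodular.

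\emph{Main obstacle.} The only real subtlety is the modeling step, not the algebra. If SINR were computed with interference only from the \emph{deployed} stations $b'\in\mathcal{A}$, then $\mathcal{X}_b$ would depend on $\mathcal{A}$. In that case $f$ could fail to be monotone, since adding a station can destroy the coverage of others. I would therefore state up front that the theorem relies on the interference definition in the SINR equation, which sums over all candidate BSs (a conservative worst case). This definition makes the $X_{bs}$ fixed parameters, and under it the coverage-function argument above goes through.
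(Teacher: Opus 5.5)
Your proof is correct, and it takes a different and more rigorous route than the paper's. The paper's appendix does not derive the result from the structure of $f$. It asserts that adding a base station gives diminishing marginal gain ``due to interference coupling and saturation effects'' under unspecified ``independent fading and bounded interference'' assumptions. It then writes $\Delta f(\mathcal{A})\ge\Delta f(\mathcal{B})$ for $\mathcal{A}\subseteq\mathcal{B}$, reusing $\mathcal{B}$, which is already the ground set. From this single inequality it concludes both monotonicity and submodularity, with no separate monotonicity argument.

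You instead start from the observation that the interference sum in the SINR runs over all candidates $b'\neq b$. Each $\mathcal{X}_b$ is therefore fixed independently of $\mathcal{A}$, so $f$ is a plain coverage function. Monotonicity then follows from inclusion of unions. Submodularity follows from $\psi_b(\mathcal{A})=|\mathcal{X}_b\setminus\bigcup_{a\in\mathcal{A}}\mathcal{X}_a|$, or from your concave-of-modular representation.

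Your route gives an actual proof and isolates the exact modeling condition it relies on. Your remark about interference coming only from the deployed set also shows that the paper's ``interference coupling'' intuition points toward the regime where the claim can fail, since adding a station can then push a sensor below $\gamma_{th}$ and decrease $f$.

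The one thing the paper's wording gestures at, and you do not state, is a probabilistic version with random fading $r_{bs}$. Your argument covers it directly: for each fading realization $f$ is a coverage function, and an expectation of monotone submodular functions is monotone submodular.
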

\begin{proof}
The proof is given in Appendix~\ref{Apdx:thmsub}.
\end{proof}

Now, we propose a SINR-aware greedy algorithm (see Algorithm~\ref{alg:greedy}) to solve the optimization problem with submodular objective function shown in~\eqref{eq:Max_Cover}. 
\begin{algorithm}
\caption{SINR-aware Greedy Algorithm}\label{alg:greedy}
\begin{algorithmic}
\State $\mathcal{G}^0 \gets \emptyset$, $\mathcal{S}^0 \gets \mathcal{S}$
\State $i \gets 1$
\While{($i \neq K$)  AND  ($\mathcal{S}^{i-1} \neq \emptyset$)}
\While{${a'\in \mathcal{S}^{i-1}}$} \Comment{SINR-aware step}
\State $\psi_{a'}(\mathcal{G}^{i-1}) = f(\mathcal{G}^{i-1} \cup \{a'\}) - f(\mathcal{G}^{i-1})$ 
\EndWhile
\State $\{a_i\} \gets \arg \min_{\{a'\}} c_{a'}(\mathcal{G}^{i-1})$ \Comment{Greedy selection}
\State $\mathcal{G}^{i} \gets \mathcal{G}^{i-1} \cup \{a_i\}$
\State $\mathcal{S}^{i} \gets \mathcal{S}^{i-1} \setminus \{a_i\}$
\State $i \gets i+1$
\EndWhile
\end{algorithmic}
\end{algorithm}

Let $\mathcal{G}$, with cardinality $G = |\mathcal{G}|$, be the set of BSs obtained using the greedy algorithm (Algorithm~\ref{alg:greedy}) and it is  initialized to a null set. Now, from~\eqref{eq:psi_b}, we see that  $\psi_{a'}(\mathcal{G}) = f(\mathcal{G} \cup \{a'\}) - f(\mathcal{G})$, for any singleton set $\{a'\} \in \mathcal{S}$. We check for each $\psi_{a'}$ and search for the subset $a_i$ that minimizes the cost $c_{a'}$.\footnote{Note that $c_a = \psi_a^{-\eta}$ for $\eta > 0$. This relationship represents an effective deployment utility metric rather than literal monetary cost.} 
Now, we solve our optimal BSD problem by recursively executing Algorithm~\ref{alg:greedy} for $K = 1, 2, \ldots$, until complete coverage is achieved. That is $\mathcal{G}^{G} = \mathcal{S}$ and also $\mathcal{S}^{G} = \emptyset$. 

For the sake of comparison with the proposed greedy algorithm in the numerical results that follow,  we compute an optimum solution $\mathcal{A^{\star}}$, with cardinality $A^{\star} = |\mathcal{A^{\star}}|$, through exhaustive search, which is still manageable for small values of $S$ and $B$. To reduce the number of computations, we perform the search operation as follows:
First we start with $A=1$, i.e., consider one station at a time and check whether it can cover the entire region. If yes, then the optimum solution size $A^{\star} = 1$, else we increase $A=2$. Now consider all possible combinations of 2 BSs taken at a time and check whether they can cover the entire region. If yes, then the optimum solution size  $A^{\star} = 2$, else we increase $A=3$. Similar procedure is continued until at $A = \bar{B}$, at least one set of $\bar{B}$ BSs covers the entire region, which is the solution size $A^{\star} = \bar{B}$. 

\begin{theorem}\label{thm:greedy}
The proposed SINR-aware greedy algorithm achieves a $(1-1/e)$ approximation guarantee with respect to the optimal solution.
\end{theorem}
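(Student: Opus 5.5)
The plan is to read Theorem~\ref{thm:greedy} in the standard Nemhauser--Wolsey--Fisher sense. We interpret it as a statement about coverage achieved with a given budget of $K$ base stations. Let $\mathcal{G}^K$ be the greedy set after $K$ selections. Let $\mathcal{A}^{\star}_K \in \arg\max_{|\mathcal{A}|\le K} f(\mathcal{A})$ be an optimal $K$-subset. The goal is
\begin{equation*}
f(\mathcal{G}^K) \ge \left(1-\tfrac{1}{e}\right) f(\mathcal{A}^{\star}_K).
\end{equation*}
The first step is to note that the greedy selection rule in Algorithm~\ref{alg:greedy} is equivalent to picking the element of maximum marginal gain. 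Since $c_{a'}(\mathcal{G}^{i-1}) = \psi_{a'}(\mathcal{G}^{i-1})^{-\eta}$ with $\eta>0$ is strictly decreasing in $\psi_{a'}(\mathcal{G}^{i-1})$, minimizing the cost selects $a_i \in \arg\max_{a'} \big(f(\mathcal{G}^{i-1}\cup\{a'\}) - f(\mathcal{G}^{i-1})\big)$. We also use $f(\emptyset)=0$, which holds directly from \eqref{eq:coverage}.

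The core inequality uses Theorem~\ref{thm:sub}. Write $\mathcal{A}^{\star}_K = \{o_1,\dots,o_K\}$. By monotonicity,
\begin{equation*}
f(\mathcal{A}^{\star}_K) \le f(\mathcal{G}^{i-1}\cup \mathcal{A}^{\star}_K).
\end{equation*}
Adding the $o_j$ one at a time, submodularity bounds each increment by the marginal gain of $o_j$ with respect to $\mathcal{G}^{i-1}$ alone. Each such gain is at most the greedy gain $f(\mathcal{G}^{i})-f(\mathcal{G}^{i-1})$. Hence
\begin{equation*}
f(\mathcal{A}^{\star}_K) - f(\mathcal{G}^{i-1}) \le K\big(f(\mathcal{G}^{i}) - f(\mathcal{G}^{i-1})\big).
\end{equation*}
Setting $\delta_i = f(\mathcal{A}^{\star}_K) - f(\mathcal{G}^{i})$, this rearranges to $\delta_i \le (1-1/K)\delta_{i-1}$. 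Iterating from $\delta_0 = f(\mathcal{A}^{\star}_K)$ gives $\delta_K \le (1-1/K)^K f(\mathcal{A}^{\star}_K) \le e^{-1} f(\mathcal{A}^{\star}_K)$, which is the claim.

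The main obstacle is interpretive rather than technical. The optimization problem \eqref{eq:Max_Cover} is a min-cost covering problem, not a cardinality-constrained maximization, and for set cover the greedy algorithm only has a logarithmic guarantee (Wolsey's $H(\max_b \psi_b)$ bound). So I would state explicitly that the $(1-1/e)$ guarantee refers to the coverage attained at each budget $K$ in the recursive execution $K=1,2,\dots$. I would then add the consequence that when the optimum covers all $S$ sensors with $A^{\star}$ stations, greedy with $K=A^{\star}$ covers at least $(1-1/e)S$ sensors.

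A second subtlety is that $\mathcal{X}_b$ is defined with interference from all $b'\neq b$ in $\mathcal{B}$. It therefore does not depend on $\mathcal{A}$, and this is exactly what makes Theorem~\ref{thm:sub} applicable. I would flag this assumption rather than try to extend the argument to set-dependent SINR, where submodularity can fail.
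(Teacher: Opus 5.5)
Your proposal is correct and follows the same route as the paper's proof (Appendix~\ref{Apdx:greedy}). Both use the standard Nemhauser--Wolsey--Fisher induction: the greedy pick gains at least $\frac{1}{K}\bigl(f(\mathcal{A}^*)-f(\mathcal{A}_{i-1})\bigr)$, and iterating gives $(1-1/K)^K\le e^{-1}$. Your version is more careful because it makes explicit what the paper leaves implicit. The comparison set $\mathcal{A}^*$ must be an optimal coverage set of size at most $K$, not the min-cost cover of \eqref{eq:Max_Cover}. Monotonicity is needed for $f(\mathcal{A}^*)\le f(\mathcal{G}^{i-1}\cup\mathcal{A}^*)$. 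And $(1-1/K)^K\le e^{-1}$ holds for every $K$, not only as $K\to\infty$.
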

\begin{proof}
The proof is given in Appendix~\ref{Apdx:greedy}.
\end{proof}
\begin{theorem}\label{thm:complexity} The proposed SINR-aware greedy algorithm has time complexity:
\begin{equation}
\mathcal{O}(K B S)
\end{equation}
and space complexity:
\begin{equation}
\mathcal{O}(B + S)
\end{equation}
\end{theorem}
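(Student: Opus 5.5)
The argument is a direct count of the operations in Algorithm~\ref{alg:greedy}, done in two stages. First, a one-time preprocessing step computes the coverage indicators $X_{bs}$, and hence the sets $\mathcal{X}_b$ and the counts $\psi_b$. Second, the greedy loop is charged per iteration. The model must be made explicit: each $X_{bs}$ and each set $\mathcal{X}_b$ is read from a precomputed table. We also maintain a Boolean array $\mathrm{cov}[1..S]$ that marks which sensors are already covered by $\mathcal{G}^{i-1}$.

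For the preprocessing, fix a sensor $s$. We compute $P_r(b,s)$ for all $b$ and the total $T_s=\sum_{b}P_r(b,s)+\sigma^2$, which costs $\mathcal{O}(B)$. Each SINR then follows in $\mathcal{O}(1)$ as $P_r(b,s)/(T_s-P_r(b,s))$, so all $X_{bs}$ are obtained in $\mathcal{O}(BS)$ time rather than the naive $\mathcal{O}(B^2S)$. This subtraction trick is the step I expect to be the main obstacle, since a careless count gives $\mathcal{O}(B^2 S)$. I will state it explicitly so that the preprocessing is absorbed into the $\mathcal{O}(KBS)$ bound for $K\ge 1$.

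For each greedy iteration $i$, the inner loop evaluates the marginal gain $\psi_{a'}(\mathcal{G}^{i-1})$ for every remaining candidate $a'$. There are at most $B$ candidates, and for each one we count the sensors $s\in\mathcal{X}_{a'}$ with $\mathrm{cov}[s]=0$. Since $|\mathcal{X}_{a'}|\le S$, this costs $\mathcal{O}(S)$ per candidate and $\mathcal{O}(BS)$ per iteration. The arg-min of $c_{a'}=\psi_{a'}^{-\eta}$ is the arg-max of $\psi_{a'}$, which takes $\mathcal{O}(B)$. Updating $\mathrm{cov}$ with $\mathcal{X}_{a_i}$ and removing $a_i$ from the candidate set takes $\mathcal{O}(S+B)$. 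The outer loop runs at most $K$ times, so the total time is $\mathcal{O}(K(BS+B+S))=\mathcal{O}(KBS)$.

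For space, the working storage is as follows. Marginal gains need one number per candidate, $\mathcal{O}(B)$. The covered-sensor indicator needs $\mathcal{O}(S)$. The selected set $\mathcal{G}$ and the remaining candidates need $\mathcal{O}(B)$. Locations and powers need $\mathcal{O}(B+S)$. To stay within $\mathcal{O}(B+S)$ we do not store the full $B\times S$ indicator matrix. Instead, we store only the per-sensor totals $T_s$, which take $\mathcal{O}(S)$, and recompute $X_{a's}$ on the fly in $\mathcal{O}(1)$ each. This keeps the per-candidate evaluation at $\mathcal{O}(S)$ time and leaves the time bound unchanged. The Rayleigh realizations $r_{bs}$ are treated as generated on demand with a fixed seed, or equivalently as part of the input oracle, and are not counted as working memory. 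Reconciling the time and space bounds through this recomputation is the remaining delicate point of the argument.
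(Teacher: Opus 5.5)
Your proof is correct and uses the same counting argument as the paper: at most $K$ iterations, at most $B$ candidates per iteration, $\mathcal{O}(S)$ per marginal-gain evaluation, and memory dominated by the $\mathcal{O}(B+S)$ coordinates. It also supplies three details the paper leaves implicit: the covered-sensor array that makes each evaluation genuinely $\mathcal{O}(S)$, the $\mathcal{O}(BS)$ preprocessing of the totals $T_s$, and on-the-fly recomputation of $X_{a's}$ so that no $B\times S$ table is stored. When you write it up, drop the ``precomputed table'' model from your first paragraph and use the recomputation model throughout, since storing all $X_{bs}$ or the sets $\mathcal{X}_b$ would violate the $\mathcal{O}(B+S)$ space bound.
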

\begin{proof}
The proof is given in Appendix~\ref{Apdx:complexity}.
\end{proof}

Compared to the exhaustive search, whose time complexity is $\mathcal{O}(2^S)$, the proposed greedy algorithm runs much faster.
The greedy approach provides a scalable and efficient solution with near-optimal performance. It is suitable for large-scale IoT deployments.

\section{Numerical Results}\label{Sec:Results}
\begin{figure}
    \centering 
    \includegraphics[width=0.95\linewidth]{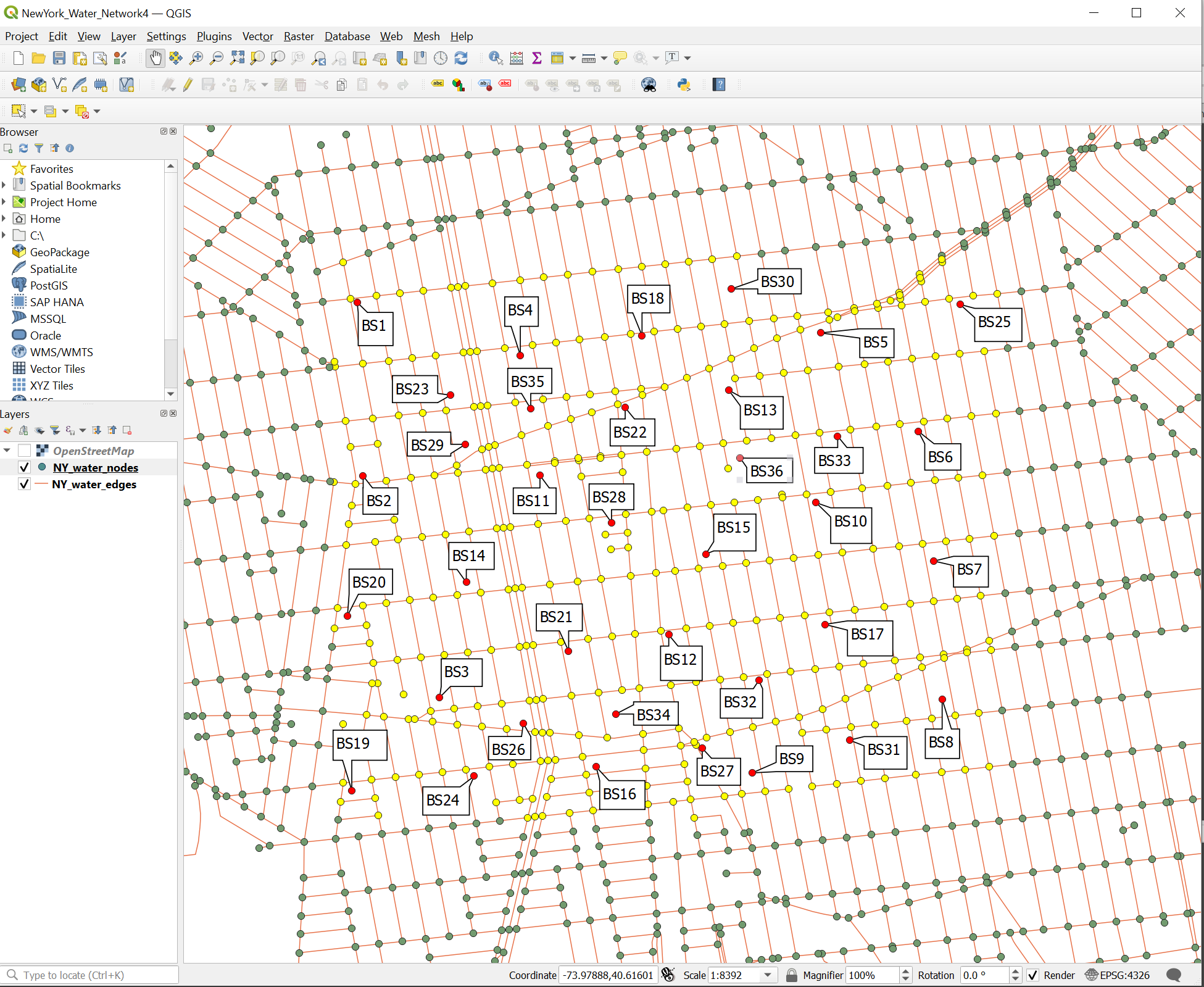}
    \caption{We have considered $ 2 \times 2$ km$^2$ of the WDN wherein small yellow circles denote the IoT sensors (which are used in our evaluations), small green circles also denote the IoT sensors (which are not used in our evaluations), red lines denote the water pipes connecting the IoT sensors, and small red circles denote the candidate BS deployment locations.  
    }
    \label{fig:NewYork_city}
\end{figure}
In this section, we provide an example of optimal BS deployment in a water distribution network (WDN). 
In this context, we provide the comparison of the solutions obtained from both the proposed SINR-aware greedy algorithm and the exhaustive search methods, considering the synthetic data for a WDN of New York city~\cite{PipeNetwork} in Fig.~\ref{fig:NewYork_city}.  For ease of implementation, we consider placing the IoT sensors at the places of water network nodes, which represent the junctions between the water pipes.  We consider the (2 km $\times$ 2 km) grid area as shown in Fig.~\ref{fig:NewYork_city} and the urban Non-Line-of-Sight (NLOS) path loss model~\cite{jiang20213gpp}. We use the `Haversine formula'~\cite{sinnott1984virtues} to compute the distance between sensors and BSs with GIS coordinates. The Haversine formula computes the great-circle distance between two points on a sphere using their latitudes and longitudes.

Let:
\begin{itemize}
    \item $\phi_1, \phi_2$ be the latitudes of points 1 and 2 (in radians),
    \item $\lambda_1, \lambda_2$ be the longitudes of points 1 and 2 (in radians),
    \item $R$ be the radius of the sphere (for Earth, $R \approx 6371\,\text{km}$).
\end{itemize}
\begin{subequations}
\begin{align}
\Delta \phi &= \phi_2-\phi_1,\\
\Delta \lambda &= \lambda_2-\lambda_1,\\
a &= \sin^2\!\left(\frac{\Delta\phi}{2}\right)
+\cos(\phi_1)\cos(\phi_2)
\sin^2\!\left(\frac{\Delta\lambda}{2}\right),\\
c &= 2\arctan2(\sqrt{a},\sqrt{1-a}),\\
d &= Rc.
\end{align}
\end{subequations}
Note that all angular quantities must be expressed in radians. The formula assumes a spherical Earth.\footnote{For higher precision geodetic calculations on an ellipsoid, more advanced models (e.g., Vincenty's formulae) are used.} 
The parameters used are: $S = 358$, $B=36$, $P_t=20$ dBm, $\sigma=4$, $r_{bs} = 1$,\footnote{For simplicity and reproducibility, the numerical results consider deterministic channel gains ($r_{bs} = 1$); incorporating stochastic fading effects constitutes an important direction for future work.} $\gamma_{th}=-50$ dB, and $\alpha=2.8$  for the given IoT water distribution network~\cite{jiang20213gpp}. 

\begin{figure}
    \centering
    \includegraphics[width=0.99\linewidth]{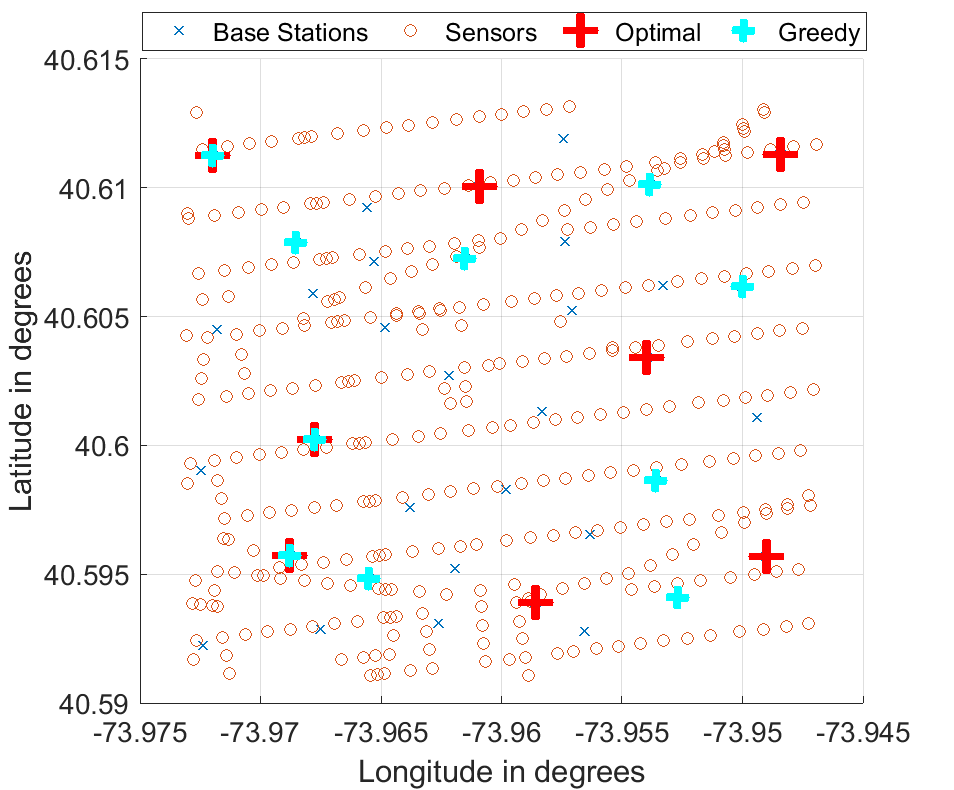}
    \caption{The sensors are marked as red $\circ$, the candidate BSs are marked as blue $\times$ while the red $+$ and turquoise $+$ are the optimal and greedy solutions, respectively. The optimal solution requires 8 BSs, whereas the proposed SINR-aware greedy algorithm solution requires 10, and only three of the options are the same between the two.}    \label{fig:GreedyOptCompare}
\end{figure}
We now apply both the proposed SINR-aware greedy algorithm and exhaustive search methods to this setup to find the optimal number of BSs required. The results are shown in Fig.~\ref{fig:GreedyOptCompare}. In this scatter plot,  sensors and BS locations with respect to GIS coordinates are  shown. The optimal solution requires 8 BSs  whereas the greedy algorithm solution requires 10 of them, and only three of the choices are identical between the two.
Note that for ease of visualization, we have shown only a part of New York city water distribution network. The similar procedure can be repeated to cover the entire city.

\begin{table}
\caption{Performance comparison of the proposed SINR-aware greedy algorithm with PSO and GA.}
\label{tab:comparison25}
\centering
\begin{tabular}{|c|c|c|c|}
\hline
Algorithm &
Selected BSs &
Total Cost &
Runtime (s) \\
\hline
SINR-Aware Greedy & 10 & 0.82 & 0.03 \\
PSO~\cite{kennedy1995particle} & 9 & 0.76 & 2.21 \\
GA~\cite{goldberg2013genetic} & 9 & 0.73 & 5.71 \\
\hline
\end{tabular}
\label{Tab:GA_PSO_Compare}
\end{table}
Next, we compare the proposed SINR-aware greedy algorithm with the Particle Swarm Optimization (PSO) algorithm and Genetic Algorithm (GA) in terms of total cost and runtime needed to execute the algorithm and the results are shown in Table~\ref{Tab:GA_PSO_Compare}. 
The results demonstrate that all three algorithms successfully achieve complete coverage of the 358 sensors. The greedy algorithm requires 10 BSs with a deployment cost of 0.82, whereas PSO and GA reduce the cost to 0.76 and 0.73, respectively, by selecting 9 BSs. Nevertheless, these cost savings come with a much larger computational overhead. Specifically, the execution time of PSO and GA is approximately 74 and 190 times greater than that of the greedy approach. Similar trends were observed across multiple random sensor deployments (results omitted due to space constraints). The omitted results consistently exhibited the same ordering among the compared algorithms.

Although GA provides the lowest deployment cost, the improvement over the proposed greedy approach is relatively small, corresponding to a reduction of approximately 11\% in deployment cost. In contrast, the greedy algorithm delivers solutions within 12.3\% of the GA cost while maintaining substantially lower execution time. Therefore, the proposed SINR-aware greedy algorithm offers an attractive tradeoff between deployment optimality and computational efficiency, making it particularly suitable for wide-area IoT sensor network planning where rapid decision-making and scalability are essential.

\section{Conclusions and Future Work}\label{Sec:Conclusions}
This paper addressed the challenge of SINR-aware Base Station Deployment (BSD) in wide-area IoT sensor networks by identifying a minimum-cost subset of base stations that provides complete sensor coverage while meeting a specified SINR criterion. Unlike traditional distance-based deployment methodologies, the proposed formulation explicitly included the impacts of interference, resulting in a more accurate representation of communication reliability in dense IoT contexts. The BSD problem was stated as a combinatorial optimization problem, and the resulting coverage function was found to have monotonicity and submodularity features. Leveraging these properties, a SINR-aware greedy algorithm was proposed with a $(1-1/e)$-approximation guarantee and polynomial-time computational complexity.

Numerical evaluations on a water distribution network in New York City demonstrated the effectiveness of the proposed methodology. The results show that the greedy approach provides deployment solutions relatively close to the optimal solutions obtained by the exhaustive search approach, at a significantly lower computational cost. In addition, the proposed approach obtained total sensor coverage with deployment costs within (12.3\%) of the best-performing metaheuristic solution when compared to GA and PSO methods while reducing the execution time by up to 190 times. Such results indicate that the proposed framework has a good trade-off between deployment quality and computational efficiency, making it a good candidate for large-scale IoT applications requiring scalability and quick decision-making.

Further avenues of research are still open. We can extend the current architecture to include heterogeneous base stations with varying transmission capacities and installation costs. Dynamic traffic patterns, energy-aware deployment considerations, and mobile base stations can enhance deployment flexibility in evolving IoT environments. Future work may consider resilient and stochastic optimization algorithms to handle uncertainties due to channel variations, sensor failures, and time-varying interference conditions. Moreover, an interesting approach to support next-generation smart city and critical infrastructure applications is to enhance the base station deployment along with resource allocation and network resilience mechanisms.

\appendices
\section{Proof of Theorem~\ref{thm:sub}}\label{Apdx:thmsub}
From~\eqref{eq:coverage}, the coverage function is written as:
\begin{equation}
f(\mathcal{A})
=
\left|
\bigcup_{a\in\mathcal{A}}
\mathcal{X}_a
\right|.
\end{equation}
Adding a new base station increases SINR coverage probability but with diminishing marginal gain due to interference coupling and saturation effects. Under independent fading and bounded interference assumptions, the marginal gain satisfies:
\begin{equation}
\Delta f(\mathcal{A}) \geq \Delta f(\mathcal{B}), \quad \mathcal{A} \subseteq \mathcal{B}
\end{equation}
Thus, $f(\mathcal{A})$ is monotone and submodular function.
\qed
\section{Proof of Theorem~\ref{thm:greedy}}\label{Apdx:greedy}
Suppose $\mathcal{A}^*$ is the optimal set and $\mathcal{A}_i$ is the greedy solution after $i$ iterations. 
Algorithm~\ref{alg:greedy} selects precisely $K$ BSs.
We demonstrate through mathematical induction that for $0 \le i \le K$,
\begin{equation}
f(\mathcal{A}^*)-f(\mathcal{A}_i)\le \left(1-\frac{1}{K}\right)^i f(\mathcal{A}^*).
\label{eq:induction}
\end{equation}

The base case of $i=0$ is trivially true. Let $i>0$. In the $i$-th step, Algorithm~\ref{alg:greedy} selects BS $a_i$, maximizing $f_{\mathcal{A}_{i-1}}(a_i)$ among the remaining set of BSs (see line 7 of Algorithm~\ref{alg:greedy}).\footnote{Since the proposed cost decreases monotonically with the number of covered sensors, maximizing the incremental coverage gain tends to favor lower-cost base station selections.} Note that the remaining BSs comprise $\mathcal{A}^*\setminus \mathcal{A}_{i-1}$, a set with a maximum size of $K$. Submodularity allows us to
\[
f(\mathcal{A}^*)-f(\mathcal{A}_{i-1})
\le
\sum_{a_i\in \mathcal{A}^*\setminus \mathcal{A}_{i-1}} f_{\mathcal{A}_{i-1}}(a_i),
\]
and this implies that the BS $a_i$ has marginal value
\begin{align}
f_{\mathcal{A}_{i-1}}(a_i)
&\ge
\frac{1}{|\mathcal{A}^*\setminus \mathcal{A}_{i-1}|}
\sum_{a_i\in \mathcal{A}^*\setminus \mathcal{A}_{i-1}} f_{\mathcal{A}_{i-1}}(a_i) \nonumber\\
&\ge
\frac{1}{K}\bigl(f(\mathcal{A}^*)-f(\mathcal{A}_{i-1})\bigr).
\end{align}

Assuming that~\eqref{eq:induction} holds true for $\mathcal{A}_{i-1}$, we have
\begin{align*}
f(\mathcal{A}^*)\!-\!f(\mathcal{A}_i)
&= f(\mathcal{A}^*)-f(\mathcal{A}_{i-1})-f_{\mathcal{A}_{i-1}}(a_i)\\
&\le \!f(\mathcal{A}^*)-f(\mathcal{A}_{i-1})
-\frac{1}{K}\bigl(f(\mathcal{A}^*\!)-f(\mathcal{A}_{i-1}\!)\bigr)\\
&=\left(1-\frac{1}{K}\right)\bigl(f(\mathcal{A}^*)-f(\mathcal{A}_{i-1})\bigr)\\
&\le \left(1-\frac{1}{K}\right)^i f(\mathcal{A}^*),  \text{    (Solving recursively)}
\end{align*}
which proves~\eqref{eq:induction}.
Using the claim for $i=K$, we get
\begin{align}
f(\mathcal{A}^*)-f(\mathcal{A}_K)
&\le
\left(1-\frac{1}{K}\right)^K f(\mathcal{A}^*) \\ \nonumber
&\le
e^{-1}f(\mathcal{A}^*), \text{       (as $K \rightarrow \infty$)} 
\end{align}
This implies $$f(\mathcal{A}_K) \geq (1 - 1/e) f(\mathcal{A}^*)$$
\qed
\section{Proof of Theorem~\ref{thm:complexity}}\label{Apdx:complexity}
At each greedy iteration, the algorithm evaluates the marginal gain of adding each candidate base station $a \in \mathcal{B} \setminus \mathcal{A}$. Computing the SINR-based coverage gain requires checking all sensors in $\mathcal{S}$.
Thus, each evaluation requires $\mathcal{O}(S)$ operations.
Since there are $B$ candidates and $K$ greedy iterations, the total computational cost is:

\[
K \times B \times S = \mathcal{O}(KBS)
\]

The space complexity is dominated by storage of sensor and base station coordinates, yielding $\mathcal{O}(B + S)$.
\qed
\bibliographystyle{ieeetr}
\bibliography{references.bib}
\end{document}